\newcommand{\specialcell}[2][c]{  \begin{tabular}[#1]{@{}c@{}}#2\end{tabular}}
\newtheorem{prp}{Proposition}
\begin{document}

\title{Texture Retrieval via the Scattering Transform}

\author{Alexander~Sagel,~Dominik~Meyer,~\IEEEmembership{Student~Member,~IEEE}~and~Hao~Shen,~\IEEEmembership{Member,~IEEE}
\thanks{The authors are with the Department of Electrical, Electronic 
and Computer Engineering, Technische Universit\"at M\"unchen, M\"unchen,
Germany. e-mail: \{a.sagel,dominik.meyer,hao.shen\}@tum.de.}
\thanks{This work has partially been supported by the Cluster of
Excellence \emph{CoTeSys}\;--\;Cognition for
Technical Systems, funded by the Deutsche
Forschungsgemeinschaft (DFG) and International Graduate School of 
Science and Engineering (IGSSE), Technische Universit\"at M\"unchen.}
}

\markboth{}%
{}
%

\maketitle

\begin{abstract}
This work studies the problem of content-based image retrieval,
specifically, texture retrieval. 
%
Our approach employs a recently developed method,
the so-called Scattering transform, for the process of feature 
extraction in texture retrieval.
It shares a distinctive property of providing a robust representation, 
which is stable with respect to spatial deformations.
Recent work has demonstrated its capability for texture classification,
and hence as a promising candidate for the problem of texture retrieval.
Moreover, we adopt a common approach of measuring the similarity of textures by comparing the subband histograms of a filterbank transform. To this end we derive a similarity measure based on the popular Bhattacharyya Kernel.
Despite the popularity of describing histograms using parametrized
probability density functions, such as the Generalized Gaussian Distribution, it is unfortunately not applicable 
for describing most of the Scattering transform subbands, due to the
complex modulus performed on each one of them. In this work, we 
propose to use the Weibull distribution to model the Scattering
subbands of descendant layers.
Our numerical experiments demonstrated the effectiveness of 
the proposed approach, in comparison with several state of 
the arts.
\end{abstract}
\begin{IEEEkeywords}
Content-based image retrieval, feature extraction, Bhattacharyya Kernel, Scattering transform, similarity measure, Weibull distribution.
\end{IEEEkeywords}

%
\IEEEpeerreviewmaketitle

\section{Introduction}
\subsection{Overview}
\label{sec:intro}

%
%
%
%
\IEEEPARstart{C}{ontent-based} image retrieval (CBIR) is a special 
case of image classification. It can be viewed as the process of assigning a query image to a set of image classes, where each class represents a database image.
%
%
Content-based hereby refers to the mode of formulating the search query. As opposed to metadata-based image search, which relies on pre-labeling the images beforehand, a CBIR system retrieves the $n$ best matches within the database with respect to the visual similarity to the query image.

In CBIR, the concepts of \textit{feature extraction} (FE) and \textit{similarity measure} (SM) play crucial roles. FE converts an image into a \textit{feature vector} of numerical values with the aim to produce a low-dimensional, yet sensible representation of the input image in the context of some particular application. An SM assigns a numerical value to a pair of two feature vectors. In this work, we assume that the SM is nonnegative and that a smaller SM value indicates a higher similarity and vice versa. A typical CBIR system is depicted in Fig. \ref{fig:cbir}. The images to be extracted are processed via an FE algorithm and the extracted feature vectors (signatures) are stored in a database. The query image is fed to the same FE algorithm and the output is compared with all the feature vectors in the database by means of the defined SM. As a result, the images with the lowest SM values are returned. Consequently, designing a CBIR system boils down to the construction of an FE/SM framework.

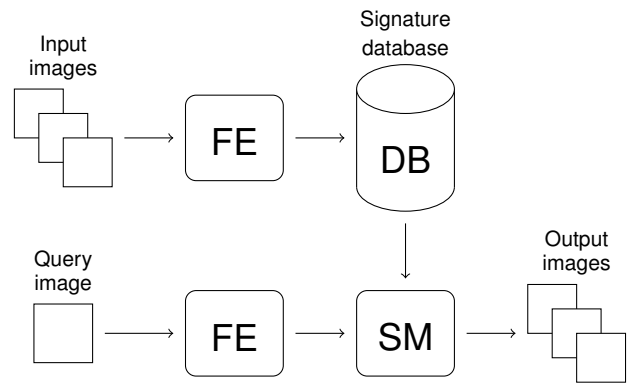
\begin{figure}
\begin{center}
\begin{tikzpicture}[scale=0.65]

\draw[fill=white] (8,4) ellipse (1 and 0.5);
\draw[fill=white, color=white] (7,4) rectangle (9,6);
\draw (7,3.99) -- (7,6);
\draw (9,3.99) -- (9,6);
\draw[fill=white] (8,6) ellipse (1 and 0.5);
\draw[fill=white] (0,5) rectangle (1,6);
\draw[fill=white] (0.5,4.5) rectangle (1.5,5.5);
\node[above,font=\footnotesize\sf] at (1,6.5) {Input}; 
\node[above,font=\footnotesize\sf] at (1,6) {images}; 
\draw[fill=white] (1,4) rectangle (2,5);
\draw[fill=white, rounded corners] (3.5,4.125) rectangle (5.5,5.875);
\node[above,font=\Large\sf] at (8,4.1) {DB}; 
\node[above,font=\Large\sf] at (4.5,4.45) {FE};
\node[above,font=\footnotesize\sf] at (8,7) {Signature}; 
\node[above,font=\footnotesize\sf] at (8,6.5) {database}; 

\draw[fill=white] (10.5,1) rectangle (11.5,2);
\draw[fill=white] (11,0.5) rectangle (12,1.5);
\draw[fill=white] (11.5,0) rectangle (12.5,1);
\node[above,font=\footnotesize\sf] at (11.5,2.5) {Output}; 
\node[above,font=\footnotesize\sf] at (11.5,2) {images}; 
\draw[fill=white, rounded corners] (7,0.125) rectangle (9,1.875);
\node[above,font=\Large\sf] at (8,0.45) {SM}; 
\draw[fill=white, rounded corners] (3.5,0.125) rectangle (5.5,1.875);
\node[above,font=\Large\sf] at (4.5,0.45) {FE}; 
\draw[fill=white] (0.4,0.4) rectangle (1.6,1.6);
\node[above,font=\footnotesize\sf] at (1,2.1) {Query}; 
\node[above,font=\footnotesize\sf] at (1,1.6) {image}; 

\draw[->] (2.25,5) -- (3.25,5);
\draw[->] (1.85,1) -- (3.25,1);
\draw[->] (5.75,5) -- (6.75,5);
\draw[->] (5.75,1) -- (6.75,1);
\draw[->] (9.25,1) -- (10.25,1);
\draw[->] (8,3.25) -- (8,2.125);

\end{tikzpicture}
\end{center}
\caption{A typical architecture of content-based image
retrival.}\label{fig:cbir}
\end{figure}

Nowadays, CBIR is employed to deal with big databases and vast amounts of data. Thus, in additional to high retrieval efficiency, execution speed is of great concern. This requires sufficient dimensionality reduction within the FE and a computationally efficient SM.

\subsection{Related Work and Our Contribution}
Extracting sensible information from images has been a field of research in digital signal processing for roughly half a century now. Naturally, the choice of the FE method depends on the specific application. For instance, when comparing images of landscapes, pictures taken in a forest will have different color distributions from those taken in a desert, so an FE based on the color properties of the image is a reasonable approach. On the other hand, images of objects are strongly characterized by shapes, so in this case the FE could rely on extracted contours. This work focuses on CBIR for textures. The earliest approaches for texture FE, such as the gray-level co-occurrence matrices (GLCM) \cite{Haralick1973} and the biologically motivated proposed Tamura features \cite{Tamura1978} have been presented in the 1970s and are often based on numerical measures for how nearby pixels relate to each other.

More recent approaches include incorporating spatial-frequency representations of the images \cite{Randen1999}, such as the Fast Wavelet Transform (FWT) \cite{Smith1994}, Gabor frames \cite{Manjunath1996} or Curvelet frames \cite{Sumana2008}. A conventional approach is designing the features based on the energy of the respective transform coefficients, but it can be beneficial to look at their statistical properties \cite{VandeWouwer1999,Vasconcelos2004}. Thus, FE approaches based on the histograms of filterbank transforms have become 
more prominent, cf. \cite{Do2002,Do2002a,Po2003,Tzagkarakis2006,Choy2010,Allili2012,Kwitt2008}. Techniques of this kind are referred to as 
\textit{Subband Histogram} methods in the work.

Linear filterbank transforms feed the input image to a bank of frequency-selective filters, yielding a set of band-pass signals as a representation. One problem of constructing FE based on such a decomposition is that the higher frequency subbands are prone to deformations in the spatial domain. Thereby, even slight deformations of an image will yield significantly different transform coefficients. However, simply neglecting the higher-frequency subbands is not desirable since they carry important information about such distinctive features as edges and corners. As a remedy, Mallat introduced the \textit{Scattering transform} \cite{Mallat2012}, a non-linear signal representation involving filterbanks and the modulus operation which transforms high-frequency components into low-pass representations. The Scattering transform appears to perform well in texture classification tasks \cite{Sifre2012,Sifre2013,Bruna2012}. The key contribution of this work is to define a statistical model in order to design subband histogram FE for texture images based on their Scattering transform coefficients. Furthermore, we introduce a similarity measure based on the extracted features, with the property of being a Kernel and thus can be interpreted as an inner product of Textures transformed to some Hilbert space.

The paper is organized as follows. 
Section~\ref{sec:math} provides necessary mathematical preliminaries 
of the problem of CBIR, subband histogram methods, and the
basics of the Scattering transform. 
In Section~\ref{sec:key_contrib} we develop a statistical framework 
of CBIR based on the Scattering transform.
In Section~\ref{sec:exp} several numerical experiments are presented
to investigate the performance of the proposed approach in comparison
with the state of the art methods.

\section{Mathematical Preliminaries}
\label{sec:math}

\subsection{Notations}
Unless stated otherwise, a \textit{signal} denotes an element of the Lebesgue space $L^2(\mathbb{R}^2)$. The terms \textit{almost everywhere} and \textit{almost all} refer to conditions which hold for all $\mathbb{R}^2\setminus \mathcal{N}$, where $\mathcal{N}$ can be any null set. For the sake of simplicity, we refer to $L^2(\mathbb{R}^2)$ as a Hilbert space of \textit{functions}, even though it would be more precise to call them \textit{equivalence classes of functions that are equal almost everywhere}. Bold-faced lowercase letters $\boldsymbol{x}$ or $\boldsymbol{x}_i$ describe vectors, while regular lowercase letters like $x$ or $x_i$ describe scalar values. Depending on the context, uppercase letters stand either for scalar values or for matrices. For a complex value $z$, $\bar{z}$ denotes its conjugate, and $\Re(z),\Im(z)$ the real and imaginary part, respectively. Angle brackets $\langle\cdot,\cdot\rangle$ denote the scalar product of $L^2(\mathbb{R}^2)$ and $\|\cdot\|$ the canonical norm induced by it. An asterisk denotes the convolution $f*g$ of two signals $f,g$.

\subsection{Subband Histogram Methods}
\label{sub:SubHist}
The coefficients $\{\boldsymbol{x}_1,...,\boldsymbol{x}_N\}$ produced by a transform of the query image can be viewed as a set of realizations of a random experiment. Let $p_d$ describe the probability density function (PDF) of the transform coefficients of some database image. 

The \textit{likelihood} of a set of random realizations $\{\boldsymbol{x}_1,...,\boldsymbol{x}_N\}$ for a PDF $p$ is defined as 
\begin{equation}
L(\boldsymbol{x}_1,...,\boldsymbol{x}_N|p)=\prod_{i=1}^N p(\boldsymbol{x}_i),
\label{eq:L}
\end{equation}
and is a measure for the probability that $\{\boldsymbol{x}_1,...,\boldsymbol{x}_N\}$ is subjected to $p$. That is to say, $L(\boldsymbol{x}_1,...,\boldsymbol{x}_N|p_d)$ can be viewed as a measure of similarity between the query image and the database images. Since the natural logarithm $\ln$ is a monotone function, this is also true for the \textit{log-likelihood} $\ln L$.

Let $\mathcal{A}$ denote the set of the PDFs of a number of database images. The best match for a query $\{\boldsymbol{x}_1,...,\boldsymbol{x}_N\}$ can thus be determined via
\begin{equation}
\begin{split}
p_d^{*}&=\operatorname*{\arg\max}_{p_d\in\mathcal{A}}\ln L(\boldsymbol{x}_1,...,\boldsymbol{x}_N|p_d)\\&=\operatorname*{\arg\max}_{p_d\in\mathcal{A}}\sum_{i=1}^N \ln p_d(\boldsymbol{x}_i).
\end{split}
\label{eq:ML}
\end{equation}
Expression \eqref{eq:ML} is called a \textit{Maximum Likelihood} (ML) solution.

Assume the query $\{\boldsymbol{x}_1,...,\boldsymbol{x}_N\}$ is subjected to a PDF $p_q$. Then, by the law of large numbers, the log-Likelihood $\ln L(\boldsymbol{x}_1,...,\boldsymbol{x}_N|p_d)$ can be approximated by the negative \textit{cross-entropy}
\begin{equation}
-H(p_q,p_d)=\int_{p_d(x)\neq0}p_q(x)\ln p_d(x)\mathrm{d}x.
\label{eq:xEntr}
\end{equation}
Assuming a generative model, the cross-entropy $H(p_q,p_d)$ defines a similarity measure between the respective query and database image.

One of the most well-studied transforms in image processing is the multiresolution decomposition produced by the FWT. In particular, it is known that the histograms of its band-pass subbands can be modeled by the \textit{Generalized Gaussian Distribution} (GGD), c.f.\cite{Mallat1989}, defined as 
\begin{equation}
p_\mathrm{GGD}(x|\alpha,\beta)=\frac{\beta}{2\alpha\Gamma(1/\beta)} \; e^{-(|x|/\alpha)^\beta},
\end{equation}
where $\Gamma$ denotes the Gamma-Function,
\begin{equation}
\Gamma(t) = \int_0^\infty  x^{t-1} e^{-x}\,{\rm d}x,
\end{equation}
and the parameter $\alpha\in\mathbb{R}_+$ determines the scale of the distribution while $\beta\in\mathbb{R}_+$ describes the shape. The cross-entropy between two GGDs ist given by
\begin{equation}
\begin{split}
&H(p_{\mathrm{GGD}}(x|\alpha_1,\beta_1),p_{\mathrm{GGD}}(x|\alpha_2,\beta_2))\\
=&\ln\frac{2\alpha_2}{\beta_2}+\ln\Gamma\left(\frac{1}{\beta_2}\right)
+\left(\frac{\alpha_1}{\alpha_2}\right)^{\beta_2}\frac{\Gamma\left((\beta_2+1)/\beta_1\right)}{\Gamma(1/\beta_1)}.
\end{split}
\label{eq:xEntr_GGD}
\end{equation}
Equation \eqref{eq:xEntr_GGD} provides a parametrized SM for texture images. The GGD parameters can be estimated from each image using FE based on ML which together with the SM in \eqref{eq:xEntr_GGD} defines a complete framework for CBIR. This approach was explored and discussed in \cite{Do2002}, though it was equivalently formulated in terms of the Kullback-Leibler divergence (KLD) rather than the cross-entropy. It can be viewed as a blueprint for other subband histogram methods for which the motivation is twofold. First, the FWT is not the only filterbank transform for images and is not necessarily the best basis for texture FE. Furthermore, even though the cross-entropy can be rigorously motivated by ML optimization, it lacks any geometrical interpretation, which poses the question, whether there are more suitable SMs for parametrized probability models.
\subsection{The Scattering transform}
\label{sub:WST}
\subsubsection{Definition}
Let $\theta\in L^2(\mathbb{R}^2)$ be a rotationally symmetric window function with low-pass characteristics. Let $\boldsymbol{\eta}\in\mathbb{R}^2\setminus\{\boldsymbol{0}\}$ and $J\in\mathbb{Z}$ be fixed and $\mathcal{R}\subset SO(2)$ a finite group of rotation matrices. With $\psi(\boldsymbol{x})=\theta(\boldsymbol{x})e^{i\boldsymbol{\eta}^\intercal\boldsymbol{x}}$, we define the  wavelet $\psi_{j, R}$ as
\begin{equation}
\psi_{j,R}(\boldsymbol{x})=4^{-j}\psi(2^{-j}R\boldsymbol{x}),\ j\in\{J,J-1,...\},R\in\mathcal{R}.
\end{equation}
Further, let us assume a low-pass and rotationally symmetric \textit{scaling function} \cite{Mallat2008} $\phi\in L^{2}(\mathbb{R}^2)$ and define
\begin{equation}
\phi_J(\boldsymbol{x})=4^{-J}\phi(2^{-J}\boldsymbol{x}),
\end{equation}
such that for the respective Fourier transforms $\hat{\psi},\hat{\phi}$,
\begin{equation}
|\hat{\phi}(2^J\boldsymbol{\omega})|^2+\sum\limits_{j=-\infty}^{J}\sum_{R\in\mathcal{R}}|\hat{\psi}(2^JR\boldsymbol{\omega})|^2=1
\label{eq:WST_tight}
\end{equation}
holds, for almost all $\boldsymbol{\omega}\in\mathbb{R}^2$. Then the set
\begin{equation}
\begin{split}
 \mathcal{D}=&\ \{\bar{\phi}_J(\boldsymbol{u}-\boldsymbol{x})\}_{\boldsymbol{u}\in\mathbb{R}^2}\\
&\cup\{\bar{\psi}_{j,R}(\boldsymbol{u}-\boldsymbol{x})\}_{\boldsymbol{u}\in\mathbb{R}^2,j\in\mathbb{Z}\wedge j\leq J,R\in\mathcal{R}}
\end{split}
\end{equation}
constitutes a Parseval-tight frame \cite{Mallat2008}. It will be assumed to span $L^2(\mathbb{R}^2)$ in the following. Note that for two signals $f,g$, the equation
\begin{equation}
(f*g)(\boldsymbol{u})=\langle f(\boldsymbol{x}),\bar{g}(\boldsymbol{u}-\boldsymbol{x})\rangle
\end{equation}
holds.

At the core of the \textit{Windowed Scattering transform} (WST) \cite{Mallat2012} is a dyadic wavelet decomposition $U_{\phi,\psi,J,\mathcal{R}}[f;j,R]$ of the input signal $f\in L^2 (\mathbb{R}^2)$ with the complex modulus performed on the band-pass components, defined as
\begin{equation}
U_{\phi,\psi,J,\mathcal{R}}[f;j,R]=\begin{cases}
|\psi_{j,R}*f|,\ &j\leq J,\\
\phi_J*f,\ &j>J.
\end{cases}
\label{eq:U}
\end{equation}
The modulus operation $|\cdot|$ traverses some of the energy of the band-pass signals towards lower frequencies. Therefore, $U_{\phi,\psi,J,\mathcal{R}}$ can be applied to the output signals $|\psi_{j,R}*f|$ again. This yields a tree structure like the one depicted in Fig. \ref{fig:WST_tree}. Note that \eqref{eq:U} yields an infinite (but countable) number of output signals. However, in practice we deal with bandlimited input signals $f$. Hence, there 
exists an integer $J_{l}$ with
$J_{l} <J$, such that Eq.~\eqref{eq:U} needs to be evaluated for $j\geq J_{l}$ only, which corresponds to a tree with finite breadth.

Basically, the idea of the WST is to apply $U_{\phi,\psi,J,\mathcal{R}}$ successively to the input signal and only keep the low-pass signals, i.e. to neglect signals represented by the black nodes in Fig. \ref{fig:WST_tree}.
\begin{figure*}
\begin{center}
\begin{tikzpicture}[scale=0.375]
\node[font=\small] at (24,19.5) {$f$};
\draw[->] (24,19) -- (24,18);
\draw[fill=white, rounded corners] (22.5,16.6) rectangle (25.5,18);
\node[font=\scriptsize] at (24,17.3) {$U_{\phi,\psi,J,\mathcal{R}}$};
\draw[->] (22.5,17.3) -- (21.5,17.3);
\draw[fill=white] (21,17.3) circle (0.5);
\draw[fill=black] (36,16) circle (0.05);
\draw[fill=black] (36.25,16) circle (0.05);
\draw[fill=black] (36.5,16) circle (0.05);
\draw[fill=black] (24,3) circle (0.05);
\draw[fill=black] (24,2.75) circle (0.05);
\draw[fill=black] (24,2.5) circle (0.05);
  \draw[fill=black] (8,14.5) circle (0.5);
  \draw[->] (23,16.6) -- (8.495,14.569);
  \draw[->] (8,14) -- (8,13);
  \draw[fill=white, rounded corners] (6.5,11.6) rectangle (9.5,13);
  \node[font=\scriptsize] at (8,12.3) {$U_{\phi,\psi,J,\mathcal{R}}$};
  \draw[->] (6.5,12.3) -- (5.5,12.3);
  \draw[fill=white] (5,12.3) circle (0.5);
    \draw[fill=black] (2.667,9.5) circle (0.5);
    \draw[->] (7,11.6) -- (3.132,9.683);
    \draw[->] (2.667,9) -- (2.667,8);
    \draw[fill=white, rounded corners] (1.167,6.6) rectangle (4.167,8);
    \node[font=\scriptsize] at (2.667,7.3) {$U_{\phi,\psi,J,\mathcal{R}}$};
    \draw[->] (1.167,7.3) -- (0.167,7.3);
    \draw[fill=white] (-0.333,7.3) circle (0.5);
    \draw[fill=black] (12,11) circle (0.05);
    \draw[fill=black] (12.25,11) circle (0.05);
    \draw[fill=black] (12.5,11) circle (0.05);
      \draw[fill=black] (0.889,4.5) circle (0.5);
      \draw[->] (1.667,6.6) -- (1.063,4.969);
      \draw[fill=black] (2.667,4.5) circle (0.5);
      \draw[->] (2.667,6.6) -- (2.667,5);
      \draw[fill=black] (4.444,4.5) circle (0.5);
      \draw[->] (3.667,6.6) -- (4.271,4.969);
      \draw[fill=black] (4.444,6) circle (0.05);
      \draw[fill=black] (4.694,6) circle (0.05);
      \draw[fill=black] (4.944,6) circle (0.05);
    \draw[fill=black] (8,9.5) circle (0.5);
    \draw[->] (8,11.6) -- (8,10);
    \draw[->] (8,9) -- (8,8);
    \draw[fill=white, rounded corners] (6.5,6.6) rectangle (9.5,8);
    \node[font=\scriptsize] at (8,7.3) {$U_{\phi,\psi,J,\mathcal{R}}$};
    \draw[->] (6.5,7.3) -- (5.5,7.3);
    \draw[fill=white] (5,7.3) circle (0.5);
      \draw[fill=black] (6.222,4.5) circle (0.5);
      \draw[->] (7,6.6) -- (6.396,4.969);
      \draw[fill=black] (8,4.5) circle (0.5);
      \draw[->] (8,6.6) -- (8,5);
      \draw[fill=black] (9.778,4.5) circle (0.5);
      \draw[->] (9,6.6) -- (9.604,4.969);
      \draw[fill=black] (9.778,6) circle (0.05);
      \draw[fill=black] (10.028,6) circle (0.05);
      \draw[fill=black] (10.278,6) circle (0.05);
    \draw[fill=black] (13.333,9.5) circle (0.5);
    \draw[->] (9,11.6) -- (12.869,9.683);
    \draw[->] (13.333,9) -- (13.333,8);
    \draw[fill=white, rounded corners] (11.833,6.6) rectangle (14.833,8);
    \node[font=\scriptsize] at (13.333,7.3) {$U_{\phi,\psi,J,\mathcal{R}}$};
    \draw[->] (11.833,7.3) -- (10.833,7.3);
    \draw[fill=white] (10.333,7.3) circle (0.5);    
      \draw[fill=black] (11.555,4.5) circle (0.5);
      \draw[->] (12.333,6.6) -- (11.729,4.969);
      \draw[fill=black] (13.333,4.5) circle (0.5);
      \draw[->] (13.333,6.6) -- (13.333,5);
      \draw[fill=black] (15.111,4.5) circle (0.5);
      \draw[->] (14.333,6.6) -- (14.937,4.969);
      \draw[fill=black] (15.111,6) circle (0.05);
      \draw[fill=black] (15.361,6) circle (0.05);
      \draw[fill=black] (15.611,6) circle (0.05);
  \draw[fill=black] (24,14.5) circle (0.5);
  \draw[->] (24,16.6) -- (24,15);
  \draw[->] (24,14) -- (24,13);
  \draw[fill=white, rounded corners] (22.5,11.6) rectangle (25.5,13);
  \node[font=\scriptsize] at (24,12.3) {$U_{\phi,\psi,J,\mathcal{R}}$};
  \draw[->] (22.5,12.3) -- (21.5,12.3);
  \draw[fill=white] (21,12.3) circle (0.5);
    \draw[fill=black] (18.667,9.5) circle (0.5);
    \draw[->] (23,11.6) -- (19.132,9.683);
    \draw[->] (18.667,9) -- (18.667,8);
    \draw[fill=white, rounded corners] (17.167,6.6) rectangle (20.167,8);
    \node[font=\scriptsize] at (18.667,7.3) {$U_{\phi,\psi,J,\mathcal{R}}$};
    \draw[->] (17.167,7.3) -- (16.167,7.3);
    \draw[fill=white] (15.667,7.3) circle (0.5);
    \draw[fill=black] (28,11) circle (0.05);
    \draw[fill=black] (28.25,11) circle (0.05);
    \draw[fill=black] (28.5,11) circle (0.05);
      \draw[fill=black] (16.889,4.5) circle (0.5);
      \draw[->] (17.667,6.6) -- (17.063,5);
      \draw[fill=black] (18.667,4.5) circle (0.5);
      \draw[->] (18.667,6.6) -- (18.667,5);
      \draw[fill=black] (20.444,4.5) circle (0.5);
      \draw[->] (19.667,6.6) -- (20.27,5);
      \draw[fill=black] (20.444,6) circle (0.05);
      \draw[fill=black] (20.694,6) circle (0.05);
      \draw[fill=black] (20.944,6) circle (0.05);
    \draw[fill=black] (24,9.5) circle (0.5);
    \draw[->] (24,11.6) -- (24,10);
    \draw[->] (24,9) -- (24,8);
    \draw[fill=white, rounded corners] (22.5,6.6) rectangle (25.5,8);
    \node[font=\scriptsize] at (24,7.3) {$U_{\phi,\psi,J,\mathcal{R}}$};
    \draw[->] (22.5,7.3) -- (21.5,7.3);
    \draw[fill=white] (21,7.3) circle (0.5);      
      \draw[fill=black] (22.222,4.5) circle (0.5);
      \draw[->] (23,6.6) -- (22.396,4.969);
      \draw[fill=black] (24,4.5) circle (0.5);
      \draw[->] (24,6.6) -- (24,5);
      \draw[fill=black] (25.778,4.5) circle (0.5);                
      \draw[->] (25,6.6) -- (25.64,4.969);
      \draw[fill=black] (25.778,6) circle (0.05);
      \draw[fill=black] (26.028,6) circle (0.05);
      \draw[fill=black] (26.278,6) circle (0.05);
    \draw[fill=black] (29.333,9.5) circle (0.5);
    \draw[->] (25,11.6) -- (28.869,9.683);   
    \draw[->] (29.333,9) -- (29.333,8);
    \draw[fill=white, rounded corners] (27.833,6.6) rectangle (30.833,8);
    \node[font=\scriptsize] at (29.333,7.3) {$U_{\phi,\psi,J,\mathcal{R}}$};
    \draw[->] (27.833,7.3) -- (26.833,7.3);
    \draw[fill=white] (26.333,7.3) circle (0.5);
      \draw[fill=black] (27.555,4.5) circle (0.5);
      \draw[->] (28.333,6.6) -- (27.729,4.969);
      \draw[fill=black] (29.333,4.5) circle (0.5);
      \draw[->] (29.333,6.6) -- (29.333,5);
      \draw[fill=black] (31.111,4.5) circle (0.5);
      \draw[->] (30.333,6.6) -- (30.937,4.969);
      \draw[fill=black] (31.111,6) circle (0.05);
      \draw[fill=black] (31.361,6) circle (0.05);
      \draw[fill=black] (31.611,6) circle (0.05);
  \draw[fill=black] (40,14.5) circle (0.5); 
  \draw[->] (25,16.6) -- (39.505,14.569);
  \draw[->] (40,14) -- (40,13);
  \draw[fill=white, rounded corners] (38.5,11.6) rectangle (41.5,13);
  \node[font=\scriptsize] at (40,12.3) {$U_{\phi,\psi,J,\mathcal{R}}$};
  \draw[->] (38.5,12.3) -- (37.5,12.3);
  \draw[fill=white] (37,12.3) circle (0.5);
    \draw[fill=black] (34.667,9.5) circle (0.5);
    \draw[->] (39,11.6) -- (35.132,9.683);
    \draw[->] (34.667,9) -- (34.667,8);
    \draw[fill=white, rounded corners] (33.167,6.6) rectangle (36.167,8);
    \node[font=\scriptsize] at (34.667,7.3) {$U_{\phi,\psi,J,\mathcal{R}}$};
    \draw[->] (33.167,7.3) -- (32.167,7.3);
    \draw[fill=white] (31.667,7.3) circle (0.5);
    \draw[fill=black] (44,11) circle (0.05);
    \draw[fill=black] (44.25,11) circle (0.05);
    \draw[fill=black] (44.5,11) circle (0.05);
      \draw[fill=black] (32.889,4.5) circle (0.5);
      \draw[->] (33.667,6.6) -- (33.063,4.969);
      \draw[fill=black] (34.667,4.5) circle (0.5);
      \draw[->] (34.667,6.6) -- (34.667,5);
      \draw[fill=black] (36.444,4.5) circle (0.5);
      \draw[->] (35.667,6.6) -- (36.27,4.969);
      \draw[fill=black] (36.444,6) circle (0.05);
      \draw[fill=black] (36.694,6) circle (0.05);
      \draw[fill=black] (36.944,6) circle (0.05);
    \draw[fill=black] (40,9.5) circle (0.5);
    \draw[->] (40,11.6) -- (40,10);
    \draw[->] (40,9) -- (40,8);
    \draw[fill=white, rounded corners] (38.5,6.6) rectangle (41.5,8);
    \node[font=\scriptsize] at (40,7.3) {$U_{\phi,\psi,J,\mathcal{R}}$};
    \draw[->] (38.5,7.3) -- (37.5,7.3);
    \draw[fill=white] (37,7.3) circle (0.5);        
      \draw[fill=black] (38.222,4.5) circle (0.5);
      \draw[->] (39,6.6) -- (38.396,4.969);
      \draw[fill=black] (40,4.5) circle (0.5);
      \draw[->] (40,6.6) -- (40,5);
      \draw[fill=black] (41.778,4.5) circle (0.5);
      \draw[->] (41,6.6) -- (41.604,4.969);        
      \draw[fill=black] (41.778,6) circle (0.05);
      \draw[fill=black] (42.028,6) circle (0.05);
      \draw[fill=black] (42.278,6) circle (0.05);
    \draw[fill=black] (45.333,9.5) circle (0.5);   
    \draw[->] (41,11.6) -- (44.833,9.683);
    \draw[->] (45.333,9) -- (45.333,8);
    \draw[fill=white, rounded corners] (43.833,6.6) rectangle (46.833,8);
    \node[font=\scriptsize] at (45.333,7.3) {$U_{\phi,\psi,J,\mathcal{R}}$};
    \draw[->] (43.833,7.3) -- (42.833,7.3);
    \draw[fill=white] (42.333,7.3) circle (0.5);        
      \draw[fill=black] (43.555,4.5) circle (0.5);
      \draw[->] (44.333,6.6) -- (43.729,4.969);
      \draw[fill=black] (45.333,4.5) circle (0.5);
      \draw[->] (45.333,6.6) -- (45.333,5);
      \draw[fill=black] (47.111,4.5) circle (0.5);
      \draw[->] (46.333,6.6) -- (46.937,4.969);
      \draw[fill=black] (47.111,6) circle (0.05);
      \draw[fill=black] (47.361,6) circle (0.05);
      \draw[fill=black] (47.611,6) circle (0.05);

\end{tikzpicture}
\end{center}
\caption{WST tree produced by successive application of $U_{\phi,\psi,J,\mathcal{R}}$ on the input signal $f$. Lowpass signals (The WST subbands) are depicted as white nodes. The complex moduli of band-pass signals are depicted as black nodes.}
\label{fig:WST_tree}
\end{figure*}
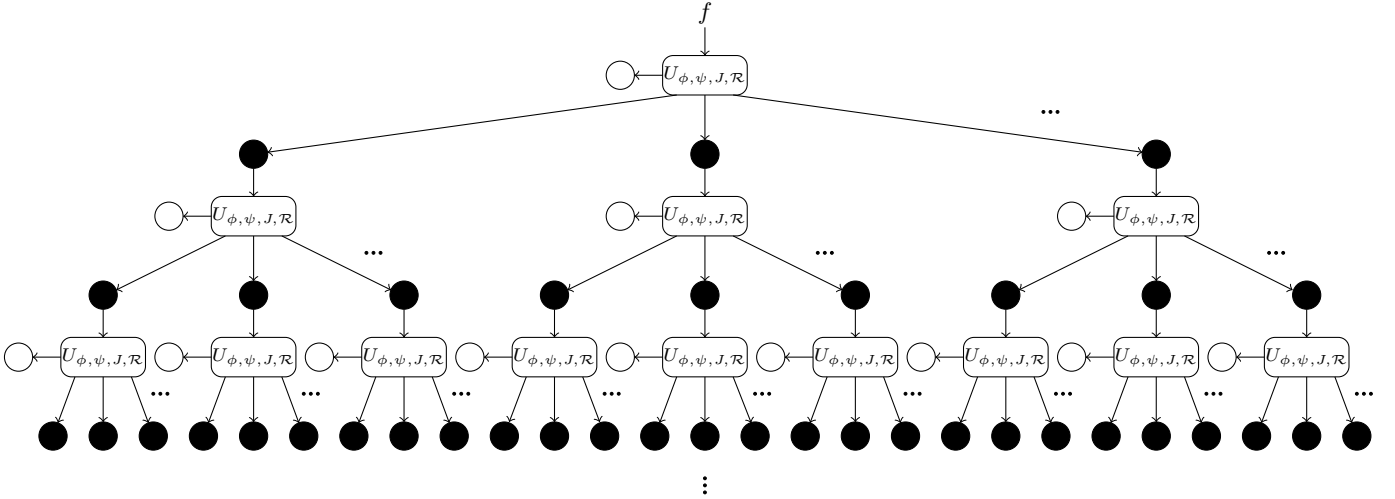
The WST along the \textit{path} $p=((j_1,R_1),...,(j_m,R_m))$ of scaling factors and rotations is defined as
\begin{equation}
S_{\phi,\psi,J,\mathcal{R}}[f;p]=\phi_J*|\psi_{j_m,R_m}*|\cdots*|\psi_{j_1,R_1}*f|\cdots||.
\end{equation}
Let us denote by $\mathcal{P}_M$ the (ordered) set of all possible 
paths $p$ of size $0$ to $M$.
We can then define the \textit{Finite-path WST} as
\begin{equation}
S_{\phi,\psi,J,\mathcal{R}}[f;\mathcal{P}_M]=\{S_{\phi,\psi,J,\mathcal{R}}[f;p]\}_{p\in\mathcal{P}_M}.
\label{eq:fp_wst}
\end{equation}
\subsubsection{Properties}
The output signals of the Finite-path WST are the elements of \eqref{eq:fp_wst}, also referred to as \textit{WST subbands} in the following, which are all low-pass signals due to the filtering with $\phi_J$. With increasing $M$, $S_{\phi,\psi,J,\mathcal{R}}[f;\mathcal{P}_M]$ captures more and more energy of the input signal $f$ and the WST representation becomes more expressive as the maximum path length grows. Let the Scattering norm be defined as
\begin{equation}
\left\|S_{\phi,\psi,J,\mathcal{R}}[f;\mathcal{P}_M]\right\|_\mathcal{S}^2=\sum_{p\in\mathcal{P}_M} \|S_{\phi,\psi,J,\mathcal{R}}[f;p] \|^2.
\end{equation}
As a consequence of the tight frame property \eqref{eq:WST_tight} of $\mathcal{D}$, the infinite-path WST is unitary, i.e.
\begin{equation}
\begin{split}
\|S_{\phi,\psi,J,\mathcal{R}}[f;\mathcal{P}_\infty]\|_\mathcal{S}&=\lim\limits_{M\rightarrow\infty}\|S_{\phi,\psi,J,\mathcal{R}}[f;\mathcal{P}_M]\|_\mathcal{S}\\&=\|f\|.
\end{split}
\end{equation}
In practice, maximum path depths of $M=3$ are expected to capture all essential information \cite{Bruna2013b,Anden2013}. Actually reconstructing $f$ from its WST involves a phase recovery problem. However, it is known \cite{Waldspurger2012} that digital realizations of wavelet representations similar to those employed in the WST are uniquely determined by their complex modulus.

The WST is non-expansive \cite{Mallat2012}, i.e. for two signals $f_1,f_2$, it can be shown \cite{Mallat2012} that for any positive integer $M$, 
\begin{equation}
\begin{split}
&\|S_{\phi,\psi,J,\mathcal{R}}[f_1;\mathcal{P}_M]-S_{\phi,\psi,J,\mathcal{R}}[f_2;\mathcal{P}_M]\|_\mathcal{S}^2\\
=&\sum\limits_{p\in\mathcal{P}_M}\|S_{\phi,\psi,J,\mathcal{R}}[f_1;p]-S_{\phi,\psi,J,\mathcal{R}}[f_2;p]\|^2\\
\leq&\|f_1-f_2\|^2
\end{split}
\end{equation}
holds, implying that the WST is robust with respect to additive noise.

Since WST is a representation consisting solely of low-pass signals, it is stable with respect to small spatial translations and deformations. 
Specifically, let $\tilde{f}$ be a deformed or translated version of $f$. Under certain mild assumptions about the underlying wavelet frame, it has been proven that the error $\|S_{\phi,\psi,J,\mathcal{R}}[f;\mathcal{P}_M]-S_{\phi,\psi,J,\mathcal{R}}[\tilde{f};\mathcal{P}_M]\|$ is bounded asymptotically, cf. \cite{Mallat2012}. 
%
Note, that we are given some freedom of choice in the wavelet frame which allows for considerable flexibility in terms of parameters such as frequency selectivity or directionality.

Due to its stability and flexibility, the WST is a convenient signal representation and can be used as the foundation for the FE from images.

\subsubsection{Normalized WST}
In order to reduce redundancy and to increase invariance to distortions, the \textit{Normalized WST} (NWST) \cite{Anden2013} was introduced. Let $\tilde{p}\in\mathcal{P}_M$ be the predecessor of $p\in\mathcal{P}_M$, i.e. $p=((j_1,R_1),...,(j_m,R_m))$ implies $\tilde{p}=((j_1,R_1),...,(j_{m-1},R_{m-1}))$.

Let $\varphi$ denote a very narrow-band lowpass blurring filter. For the layers $m\geq 1$, the NWST is defined as 
\begin{equation}
\bar{S}_{\varphi, \phi, \psi,J,\mathcal{R}}[f;p]=
\begin{cases}
\frac{S_{\phi, \psi,J,\mathcal{R}}[f;p]}{|f|*\varphi} & \mathrm{if\ } p\in\mathcal{P}_1, \\
\frac{S_{\phi, \psi,J,\mathcal{R}}[f;p]}{S_{\phi,\psi,J,\mathcal{R}}[f;\tilde{p}]} & \mathrm{otherwise.}
\end{cases}
\label{eq:nwst}
\end{equation}
In words, each subband of the WST is normalized by the respective parent subband, except for the subbands in the first layer which are normalized by the mean of the modulus of the input signal. In practice, a small constant is added to the denominator in order to avoid division by zero.

\section{Statistical Scattering CBIR}
\label{sec:key_contrib}
\subsection{Subband Modeling}
\label{sub:FE}
We propose to model the gray-value distributions of the different WST
subbands with parametrized PDFs and describe the images in terms of their 
respective parameters to obtain a complete FE mechanism on top of the WST.

The most distinctive features in textures are those of higher frequencies and are thus carried by the layers $m\geq1$. These layers contain signals of the form
\begin{equation}
\begin{split}
S_{\phi,\psi,J,\mathcal{R}}[f;p]=&\\
\phi_J*|\psi_{j_m,R_m}*|\cdots*|\psi_{j_1,R_1}*f|\cdots||&,\;p\neq p_0.
\end{split}
\label{eq:subband_m1}
\end{equation}
Clearly, the modulus eliminates all negative values which is why a symmetric model such as the GGD 
is not appropriate anymore. 
The histograms of signals from descendant layers suggest a PDF that occupies only positive values and can describe a skew to the left, which makes the \textit{Weibull Distribution} (WD) a nearby choice. In fact, the WD was already successfully employed in the modeling of complex wavelet coefficients \cite{Kwitt2008}. Fig. \ref{fig:hists} exemplarily shows histograms of WST subbands of different textures with a fitted WD curve. Despite variations in skewness and spread, the WD fittings describe the actual histograms fairly well. The PDF of the WD is defined for $x\geq 0$ as 
\begin{equation}
p_\mathrm{WD}(x|\lambda,k)=\lambda k \cdot (\lambda   x)^{k-1} \mathrm{e}^{-(\lambda \cdot x)^k}.
\label{eq:Weibull}
\end{equation}
Analogous to the GGD, $\lambda\in\mathbb{R}_+$ is the scale parameter, whereas $k\in\mathbb{R}_+$ determines the shape.
\begin{figure}
 \begin{center}
   \includegraphics[trim=23mm 8mm 21mm 5mm,clip,width=\columnwidth]{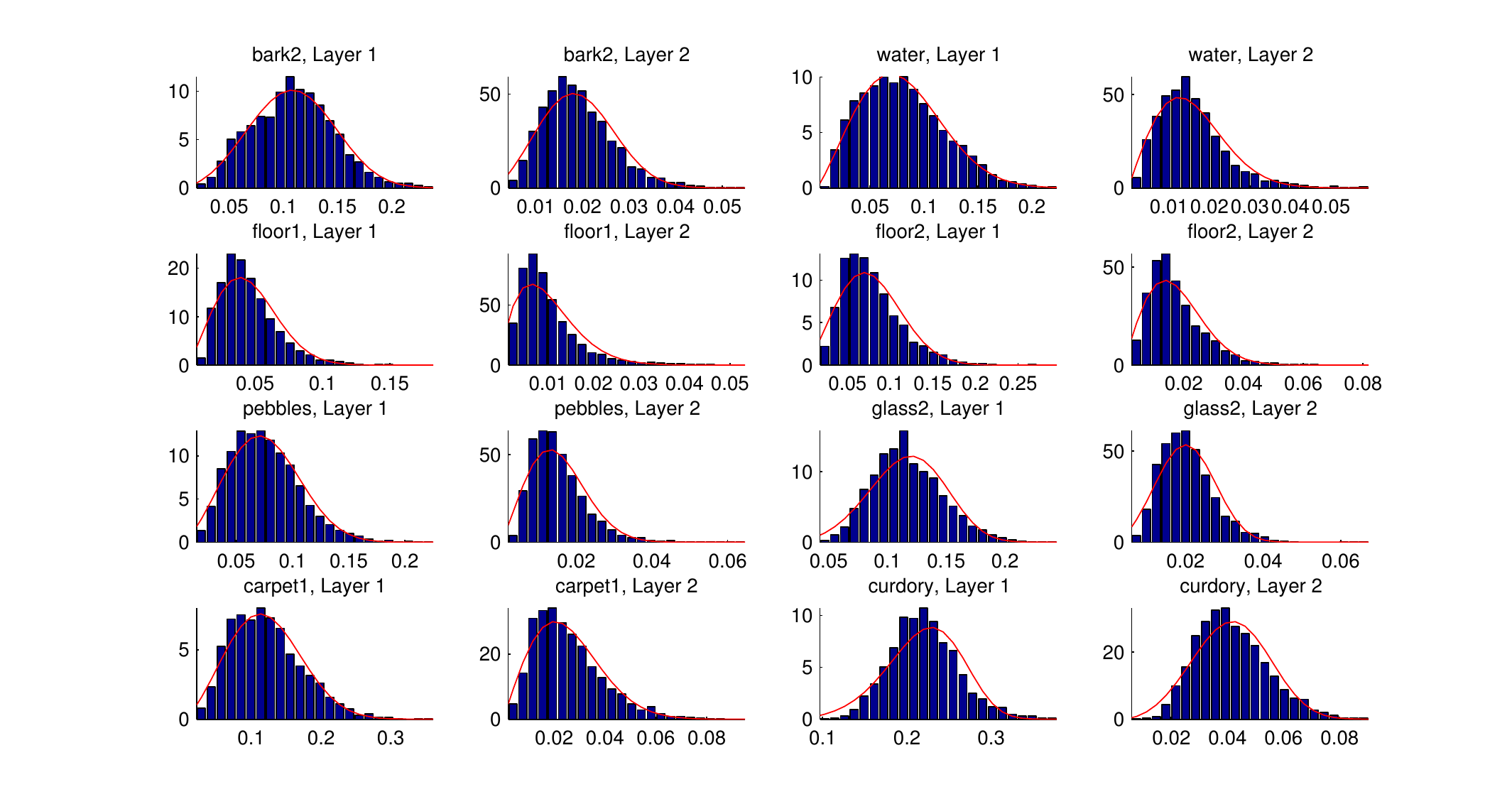}
   \caption{Histograms (blue) and respective ML Weibull fittings (red) of WST subbands from ayers $m=1$ and $m=2$ for different texture patches from the UIUC texture database}
  \label{fig:hists}
  \end{center}
\end{figure}

The above argument is purely empirical. The following discussion aims to justify further the choice of the WD as a model for WST subbands \eqref{eq:subband_m1} in the layers $m\geq1$: Like for the layer $m=0$, we neglect the impact of the low-pass filter $\phi_J$. The approximately analytical band-pass filters $\psi_{j,R}$ produce signals with similar zero-mean distributions in their real and imaginary parts. Moreover, it is known that band-pass components of a natural image can be well modeled by a GGD. This is also a reasonable assumption for band-pass components of Scattering subbands since they resemble natural images under dim lif´ghting conditions.  It is therefore natural to assume the GGD with the same parameters as a model for the distribution of the real and imaginary parts of the band-pass signals $\psi_{j_m,R_m}*|\cdots*|\psi_{j_1,R_1}*f|\cdots|$ involved in the WST. By neglecting statistical dependence between real and imaginary components, let us define a ``complex GGD'' with statistically independent real and imaginary parts as
\begin{equation}
p_\mathrm{CGGD}(z|\alpha,\beta)=p_\mathrm{GGD}(\Re(z)|\alpha,\beta)p_\mathrm{GGD}(\Im(z)|\alpha,\beta).
\end{equation}
In order to determine the PDF for the modulus $x=|z|\geq0$, we need to fix $|z|$ and integrate $p_\mathrm{CGGD}$ over the circles centered at 0 in the complex plane:
\begin{equation}
\begin{split}
&p_\mathrm{CGGD,abs}(x|\alpha,\beta)=\int_{|z|=x}p_\mathrm{CGGD}(z|\alpha,\beta)\mathrm{d}z\\
&=\left(\frac{\beta}{\alpha\Gamma(1/\beta)}\right)^2 x\int_{0}^{\frac{\pi}{2}}e^{-x^\beta((\cos\varphi)^\beta+(\sin\varphi)^\beta)/\alpha^\beta}\mathrm{d}\varphi
\end{split}
\label{eq:CGGD_abs}
\end{equation}
Evaluating the integral analytically is demanding, if not impossible. However, it is known that the modulus of a complex variable with statistically independent and Gaussian distributed real and imaginary parts abides the \textit{Rayleigh distribution}, which can be easily verified by substituting $\alpha=\sqrt{2}\sigma$ and $\beta=2$ into Eq.~\eqref{eq:CGGD_abs}, i.e.
\begin{equation}
p_\mathrm{CGGD,abs}(x|\sqrt{2}\sigma,2)=\frac{x}{\sigma^2}e^{x^2/(2\sigma^2)}.
\end{equation}
The usage of the GGD for modeling FWT subbands was motivated by the need to model histograms which are more or less peaked in shape than it is the case for the Gaussian distribution. For this purpose, the additional shape parameter $\beta$ was introduced. This corresponds to introducing an additional shape parameter to the Rayleigh distribution. Altering the peakedness of the real and imaginary part corresponds roughly to altering the skewness of the respective modulus distribution. From Eq. \eqref{eq:CGGD_abs}, we can observe that $p_\mathrm{CGGD,abs}(0|\alpha,\beta)=0$, so a good alternative to $p_\mathrm{CGGD,abs}$ for modeling the WST subbands would be a two-parameter generalization of the Rayleigh distribution which is controllable in its skewness without changing its value at 0. For $k>1$, these requirements are fulfilled by the WD. 

Even though it is sensible to model the WST with its Weibull coefficients, it is still questionable if this decision is justifiable for the NWST. It is certainly true for the first layer since it only involves an overall scaling. Unfortunately, this can not be assumed for the other descendant layers. Nevertheless, experiments show, that in practice this assumption still holds. However, for the most important ranges of $k$, the multiplicative inverses of WD distributed samples exhibit histograms which again can be well modeled by the WD as can be seen in Figure \ref{fig:hist_inv}. Thus, the normalized coefficients in \eqref{eq:nwst} for $m\geq 2$ are products of values whicht are close to be Weibull distributed and statistically independent. Thus the WD will again dominate the subband histograms.
\begin{figure}
\centering
\includegraphics[width=\columnwidth,clip,trim=15mm 10mm 10mm 10mm]{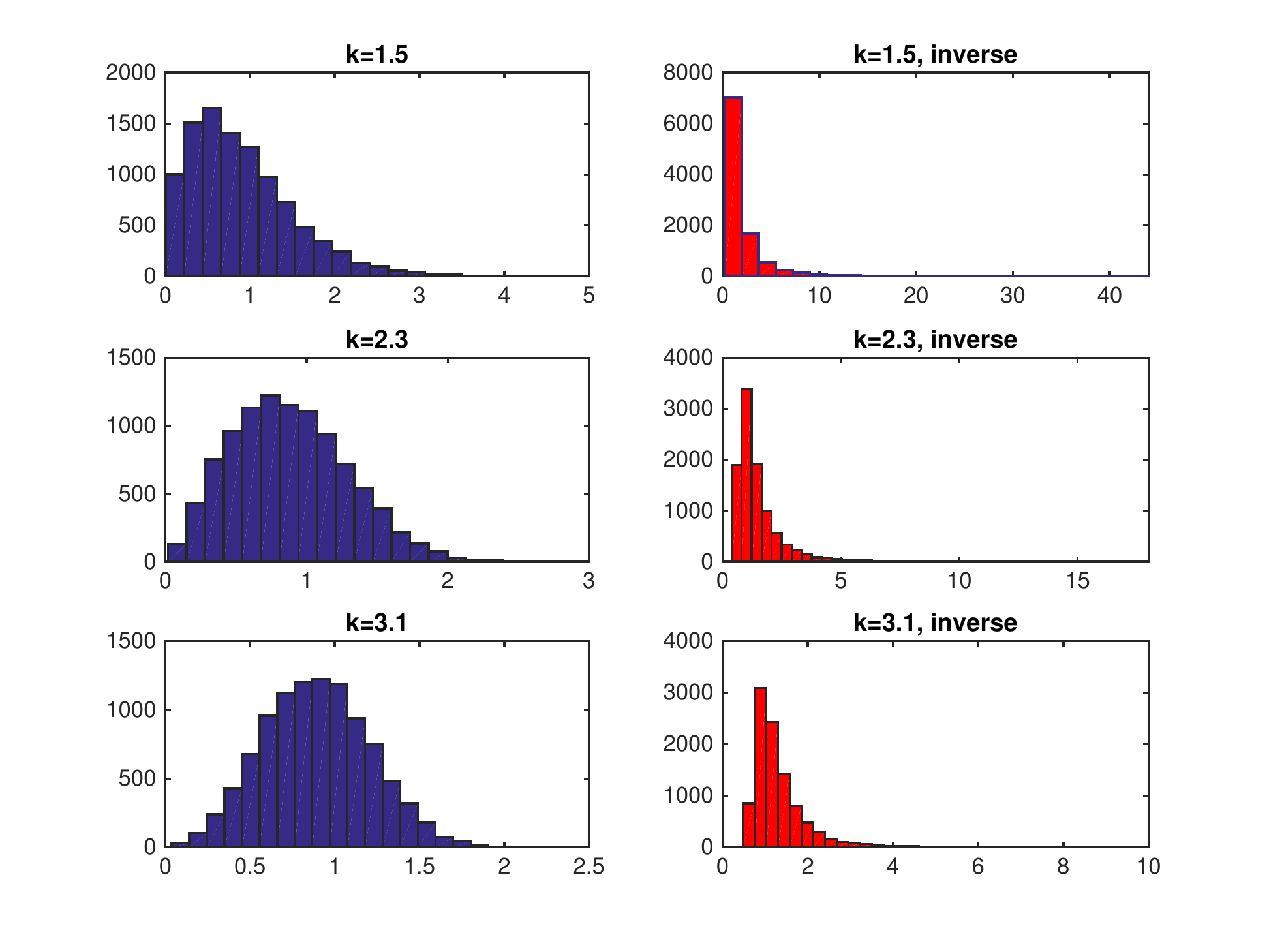}
\caption{Histogram of Weibull distributed samples and their inverses}
\label{fig:hist_inv}
\end{figure}

We employ a numerical approach to estimate the Weibull parameters via minimizing the corresponding ML function. The derivation of the algorithm can be found in Appendix~\ref{app:b}.


\subsection{Scattering Similarity}
\subsubsection{The Bhattacharyya Kernel}
Both the cross-entropy and the KLD can be derived for the WD in order to define an SM similar to the on discussed in Subsection  \ref{sub:SubHist}, cf. \cite{Bauckhage2013}. However, the derivation leads to very complex expressions. 

As an alternative, we propose an SM based on the \textit{Bhattacharyya} coefficient. For a pair of PDFs $p,q$, it is defined as
\begin{equation}
BC(p,q)=\int_{-\infty}^\infty \sqrt{p(\boldsymbol{x})q(\boldsymbol{x})}\mathrm{d}\boldsymbol{x}.
\label{eq:BC}.
\end{equation}
Expression \eqref{eq:BC} is a popular choice for Kernels in the machine learning community \cite{Jebara2004} and as such imposes a Hilbert space structure on probability based features.
\subsubsection{Weibull Similarity}
Our aim is to derive a simple approximation of \eqref{eq:BC} for a pair of Weibull distributions $p_\mathrm{WD}(x|\lambda_1,k_1),p_\mathrm{WD}(x|\lambda_2,k_2)$. To this end, let us assume that $k_1\approx k_2$. This is a justifiable assumption since the distributions are always compared for each subband individually. Let us further define
\begin{equation}
k=\frac{k_1+k_2}{2}\ \mathrm{and}\ \lambda=\sqrt[k]{\frac{\lambda_1^k+\lambda_2^k}{2}}.
\end{equation}
This ultimately leads to
\begin{equation}
\begin{split}
&BC(p_\mathrm{WD}(x|\lambda_1,k_1),p_\mathrm{WD}(x|\lambda_2,k_2))\\
=&\int_0^\infty \sqrt{p_{\mathrm{wbl}}(x|k_1,\lambda_1)*p_{\mathrm{wbl}}(x|k_2,\lambda_2)} dx \\
=&\sqrt{\lambda_1^{k_1}\lambda_2^{k_2}}\sqrt{k_1 k_2}\int_0^\infty x^{k-1}e^{-\frac{\lambda_1^{k_1} x^{k_1}+\lambda_2^{k_2}x^{k_2}}{2}} dx\\
\approx&\sqrt{\lambda_1^{k}\lambda_2^{k}}\sqrt{k_1 k_2}\int_0^\infty x^{k-1}e^{-(\lambda x)^k} dx\\
=&\sqrt{\lambda_1^k \lambda_2^k}\sqrt{k_1 k_2}\int_0^\infty \frac{\lambda^{-k}}{k}  p_\mathrm{wbl}(x|k,\lambda) dx\\
=&4\frac{\sqrt{\lambda_1^{k}\lambda_2^{k}}}{\lambda_1^{k}+\lambda_2^{k}}\frac{\sqrt{k_1 k_2}}{k_1+k_2}.
\end{split}
\label{eq:KerDrv}
\end{equation}
For the sake of convenience, let us write the arithmetical and geometrical mean of two values $y_1,y_2\in\mathbb{R}_+$ as
\begin{equation}
\mu_\mathrm{a}(y_1,y_2)=\frac{y_1+y_2}{2}\ \mathrm{and}\ \mu_\mathrm{g}(y_1,y_2)=\sqrt{y_1 y_2},
\end{equation}
respectively. From the last equation of \eqref{eq:KerDrv} we define our similarity measure for pairs of Weibull PDFs $p_\mathrm{WD}(x|\lambda_1,k_1),p_\mathrm{WD}(x|\lambda_2,k_2)$ as 
\begin{equation}
K(\lambda_1,k_1;\lambda_2,k_2)=\frac{\mu_g(\lambda_1^k,\lambda_2^k)}{\mu_a(\lambda_1^k,\lambda_2^k)}\cdot\frac{\mu_g(k_1,k_2)}{\mu_a(k_1,k_2)}.
\label{eq:WDKer}
\end{equation}
In the derivation of $K$, we replace $k_1$ and $k_2$ by $k$ each time it appears as the exponent of $\lambda_1$ and $\lambda_2$, respectively. Even though this step is not necessary to eveluate the integral in \eqref{eq:KerDrv}, it ensures a straight-forward understanding of \eqref{eq:WDKer}, with the first factor measuring the similarity of the scale and the second factor measuring the similarity of the shape. Moreover, just like \eqref{eq:BC}, the expression \eqref{eq:WDKer} defines a Kernel for Weibull PDFs, as it is shown in the following proposition.
\begin{prp}
Expression \eqref{eq:WDKer} is a Kernel for Weibull parameters.
\end{prp}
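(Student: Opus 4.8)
The plan is to prove that \eqref{eq:WDKer} is positive semidefinite, i.e.\ that for any finite family of Weibull parameters $(\lambda_1,k_1),\dots,(\lambda_n,k_n)$ and any real coefficients $c_1,\dots,c_n$ the quadratic form $\sum_{i,j}c_ic_j\,K(\lambda_i,k_i;\lambda_j,k_j)$ is nonnegative. Since \eqref{eq:WDKer} is a product of two factors, both of the generic shape $\kappa(y_1,y_2)=\mu_g(y_1,y_2)/\mu_a(y_1,y_2)=\tfrac{2\sqrt{y_1y_2}}{y_1+y_2}$, the natural route is to show that $\kappa$ is itself a kernel and then to invoke the Schur product theorem, by which the entrywise product of two positive semidefinite matrices is again positive semidefinite.

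The key lemma is therefore that $\kappa(y_1,y_2)=\tfrac{2\sqrt{y_1y_2}}{y_1+y_2}$ is a kernel on $\mathbb{R}_+$. First I would write $\tfrac{1}{y_1+y_2}=\int_0^\infty e^{-ty_1}e^{-ty_2}\,\mathrm{d}t$ and absorb the numerator into an explicit feature map $\Phi\colon y\mapsto\bigl(t\mapsto\sqrt{2y}\,e^{-ty}\bigr)\in L^2(\mathbb{R}_+)$, so that $\kappa(y_1,y_2)=\langle\Phi(y_1),\Phi(y_2)\rangle$; being a genuine inner product, $\kappa$ is positive semidefinite. Applied with $y_i=k_i$ this settles the shape factor $\tfrac{\mu_g(k_1,k_2)}{\mu_a(k_1,k_2)}$ at once, with feature map $k\mapsto\sqrt{2k}\,e^{-t k}$.

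The hard part will be the scale factor $\tfrac{\mu_g(\lambda_1^k,\lambda_2^k)}{\mu_a(\lambda_1^k,\lambda_2^k)}$. The lemma above applies verbatim only if the quantity $y_i$ fed into $\kappa$ depends on the $i$-th parameter pair alone, i.e.\ if one can assign a feature vector to each point independently of the others. This is precisely what fails here, because the exponent $k=(k_1+k_2)/2$ in $\lambda_1^k$ and $\lambda_2^k$ couples $k_1$ and $k_2$; writing $s_i=\ln\lambda_i$ the factor equals $\operatorname{sech}\!\bigl(\tfrac{(k_1+k_2)(s_1-s_2)}{4}\bigr)$, which is not of the form $\langle\Phi(\lambda_1,k_1),\Phi(\lambda_2,k_2)\rangle$. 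I expect this coupling to be the genuine obstacle, and I would resolve it by exploiting the assumption $k_1\approx k_2$ already invoked to derive \eqref{eq:KerDrv}: under $k_1\approx k_2\approx k$ one has $\lambda_i^k\approx\lambda_i^{k_i}$, so the scale factor reduces to the pre-substitution form $\tfrac{2\sqrt{\lambda_1^{k_1}\lambda_2^{k_2}}}{\lambda_1^{k_1}+\lambda_2^{k_2}}$, in which $y_i=\lambda_i^{k_i}$ depends on image $i$ only. For this matched-exponent form the feature map $(\lambda,k)\mapsto\sqrt{2\lambda^{k}}\,e^{-t\lambda^{k}}$ makes $\kappa$ an inner product, and a final appeal to the Schur product theorem combines the two factors into the desired kernel. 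The delicate point I would scrutinise most carefully is exactly whether the cosmetic replacement $\lambda_i^{k_i}\to\lambda_i^{k}$ that yields the clean interpretation of \eqref{eq:WDKer} preserves positive semidefiniteness, since it is this replacement, rather than the underlying Bhattacharyya structure, on which the kernel property ultimately hinges.
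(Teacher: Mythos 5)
Your key lemma and your Schur-product strategy are, in substance, a more rigorous rendering of the paper's own proof. The paper also splits \eqref{eq:WDKer} into the same two factors and multiplies two ``kernels'': it recognizes the shape factor $\mu_\mathrm{g}(k_1,k_2)/\mu_\mathrm{a}(k_1,k_2)$ as the Bhattacharyya coefficient of the exponential densities $p_\mathrm{WD}(x|k_1,1)$ and $p_\mathrm{WD}(x|k_2,1)$, and the scale factor as the Bhattacharyya coefficient of two Weibull densities with a \emph{common} shape $k=k_1=k_2$, for which the approximation in \eqref{eq:KerDrv} is exact. Your explicit feature map $y\mapsto\sqrt{2y}\,e^{-ty}$ in $L^2(\mathbb{R}_+)$ proves the same two positive-semidefiniteness claims without the detour through $BC$. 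The divergence is exactly at the point you flagged: the paper's argument for the scale factor establishes positive semidefiniteness only on slices where all shape parameters coincide (there the coupling through $k=(k_1+k_2)/2$ disappears), and it silently treats this as sufficient for the joint parameter space $(\lambda,k)\in\mathbb{R}_+^2$. You refused to make that leap, and you were right to refuse.

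The coupling is not a removable technicality: the proposition, read as a statement about \emph{all} Weibull parameter pairs, is false, so neither your $k_1\approx k_2$ patch nor any other argument can close the gap. Take the three parameter pairs $(\lambda,k)\in\{(1,1),(e,1),(1,10)\}$. Using your own identity (the scale factor equals $\operatorname{sech}\bigl((k_1+k_2)(s_1-s_2)/4\bigr)$ with $s=\ln\lambda$), the Gram matrix of \eqref{eq:WDKer} has off-diagonal entries $a=\operatorname{sech}(1/2)\approx0.887$, $b=\tfrac{2\sqrt{10}}{11}\approx0.575$ and $c=\tfrac{2\sqrt{10}}{11}\operatorname{sech}(11/4)\approx0.073$, i.e.
\begin{equation*}
G=\begin{pmatrix}1 & a & b\\ a & 1 & c\\ b & c & 1\end{pmatrix},
\qquad
\det G=1+2abc-a^{2}-b^{2}-c^{2}\approx-0.048<0,
\end{equation*}
so $G$ has a negative eigenvalue and \eqref{eq:WDKer} is not positive semidefinite on the full parameter space. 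What \emph{is} true --- and what your feature maps together with the Schur product theorem actually prove --- is that the decoupled expression $\frac{2\sqrt{\lambda_1^{k_1}\lambda_2^{k_2}}}{\lambda_1^{k_1}+\lambda_2^{k_2}}\cdot\frac{2\sqrt{k_1k_2}}{k_1+k_2}$ is a kernel on $\mathbb{R}_+^2$, and that \eqref{eq:WDKer} coincides with it (hence is a kernel) when restricted to families sharing a common shape parameter. That restricted statement is the most that can be salvaged, for your proof and for the paper's alike; the cosmetic substitution $\lambda_i^{k_i}\to\lambda_i^{k}$ that makes \eqref{eq:WDKer} easy to interpret is precisely what destroys the kernel property, as you suspected.
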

\begin{proof}
Note that for $k=k_1=k_2$ the approximation in \eqref{eq:KerDrv} becomes an equality. I.e.,
\begin{equation}
\begin{split}
BC(p_\mathrm{WD}(x|\lambda_1,k),p_\mathrm{WD}(x|\lambda_2,k))=&K(\lambda_1,k;\lambda_2,k)\\
=&\frac{\mu_g(\lambda_1^k,\lambda_2^k)}{\mu_a(\lambda_1^k,\lambda_2^k)}
\end{split}
\end{equation}
is a Kernel, since it is a Bhattacharyya coefficient of two PDFs. So is
\begin{equation}
BC(p_\mathrm{WD}(x|k_1,1),p_\mathrm{WD}(x|k_2,1))=\frac{\mu_g(k_1,k_2)}{\mu_a(k_1,k_2)},
\end{equation}
because both the shape and the scale parameter of the WD are constrained to be from the same domain $\mathbb{R^+}$. Therefore, Expression \eqref{eq:WDKer} is the product of two Kernels and as such a Kernel itself.
\end{proof}
The Kernel property of \eqref{eq:WDKer} provides us with some geometrical intuition, hence makes it a good choice as the fundamental building block of an SM. This includes the fact that it naturally induces a metric, which incorporates the notion of similarity as a geometric distance as is for instance the motivation behind multidimensional scaling \cite{Torgerson1958}.

It also enables the (N)WST to be conveniently subjected to many machine learning techniques such as Support Vector Machines.

\subsubsection{Similarity for Multiple Subbands}
Consider two PDFs $p,q$ of two statistically independent random variables $X,Y$,
\begin{equation}
p(x,y)=p_1(x)p_2(y),\;q(x,y)=q_1(x)q_2(y)
\end{equation}
It can be easily seen that
\begin{equation}
BC(p,q)=BC(p_1,q_1)BC(p_2,q_2).
\end{equation}
Carrying over this insight to the derivation \eqref{eq:KerDrv} yields, for a pair of sets of $N$ independent WDs with the parameter vectors $\boldsymbol{\lambda}^1,\boldsymbol{k}^1\boldsymbol{\lambda}^2,\boldsymbol{k}^2\in\mathbb{R}_+^N$,
\begin{equation}
K(\boldsymbol{\lambda}^1,\boldsymbol{k}^1;\boldsymbol{\lambda}^2,\boldsymbol{k}^2)=\prod_{i=1}^N K(\lambda_{1,i},k_{1,i};\lambda_{2,i},k_{2,i}).
\end{equation}
For a pair of WST transforms with $N$ subbands, it is therefore straightforward to derive an SM. Since by our definition a low SM indicates a high similarity we apply the logarithm and reverse the sign which finally leads us to
\begin{equation}
\begin{split}
&SM_\mathrm{Scat}(\boldsymbol{\lambda}^1,\boldsymbol{k}^1;\boldsymbol{\lambda}^2,\boldsymbol{k}^2)\\
=&-\sum_{i=1}^N\ln K(\lambda_{1,i},k_{1,i};\lambda_{2,i},k_{2,i}).
\end{split}
\label{eq:SM_scat}
\end{equation}

\subsection{Implementing the Framework}
The two previous sections provide the theoretical basis for building a complete CBIR based on WST Subband histograms.  In order to extract the features, the WST (Section \ref{sub:WST}) is performed on each image. Each subband of the layers $m\geq 1$ is then subjected to ML in order to extract the WD parameters. If the number of subbands is $N$, this amounts to feature vectors of size $2N$. In order to compare the query feature vector to the feature vectors in the database, the SM \eqref{eq:SM_scat} us employed.
As most of these steps can be performed independently of each
other, the whole system is well suited to be implemented in a parallel manner,
for example in applications for Big Data.

\section{Numerical Experiments}
\label{sec:exp}
\subsection{Experimental Settings}
\subsubsection{Implementation}
All experiments were performed in a Matlab 2014a environment. The code reproducing the key results is available online\footnote{https://www.ldv.ei.tum.de/uploads/media/texture\_retrieval\_scattering\_14.zip}. The WST was performed by means of the ScatNet\footnote{http://www.di.ens.fr/data/software/scatnet/} toolbox, version 0.2. The Weibull parameters were extracted via \texttt{wblfit} from the Statistical Toolbox.
Since it is the most standard subband histogram technique, we also implemented the FWT+GGD+KLD method according to \cite{Do2002} or comparison. Additionally, we implemented a method based the Dual-Tree Complex Wavelet Transform (DT-CWT), inspired by \cite{Kwitt2008}. The DT-CWT was performed by the DT-CWT Transform Pack\footnote{http://www-sigproc.eng.cam.ac.uk/Main/NGK}, version 4.3. Similarly the subband histograms were modeled by the WD, but the KLD was replaced by the proposed SM in \eqref{eq:SM_scat} for the sake of comparability.
\subsubsection{Choice of parameters}
Along with the regular WST, all experiments were also conducted with its normalized version as defined in \eqref{eq:nwst}. The directionality parameter, corresponding to the cardinality of the rotation group $\mathcal{R}$, is fixed to be $L=4$. All computations were performed with double precision and the subbands of the WST are critically sampled according to the Shannon-Nyquist sampling theorem. The bandwidth of the low-pass filter $\phi_J$ is chosen in a way such that it produces signals of size $16\times16=256$, in accordance with \cite{Do2002}. The WST is evaluated for the maximum path lengths $M=2$ and $M=3$. In theory, a linear increase in $M$ leads to an exponential increase in the number of subbands. However, since the modulus operation hardly traverses any energy toward higher frequencies, they become negligible with increasing path length. Apart from the mentioned, default settings of ScatNet were used. In particular, the Morlet Wavelet was used as the band-pass filter $\psi$ on each layer of the WST.

The FWT relies on the D2 4-tap Daubechies wavelet \cite{Mallat2008} as the underlying multiresolution representation. The DT-CWT was run with the options \texttt{antonini} and \texttt{qshift\_a}. In both cases, the size of the smallest subband was $16\times 16$, which corresponds to three levels of decomposition.
\subsubsection{Data}
\begin{figure*}
\begin{center}
\includegraphics[width=0.9\textwidth,clip,trim=0mm 80mm 0mm 80mm]{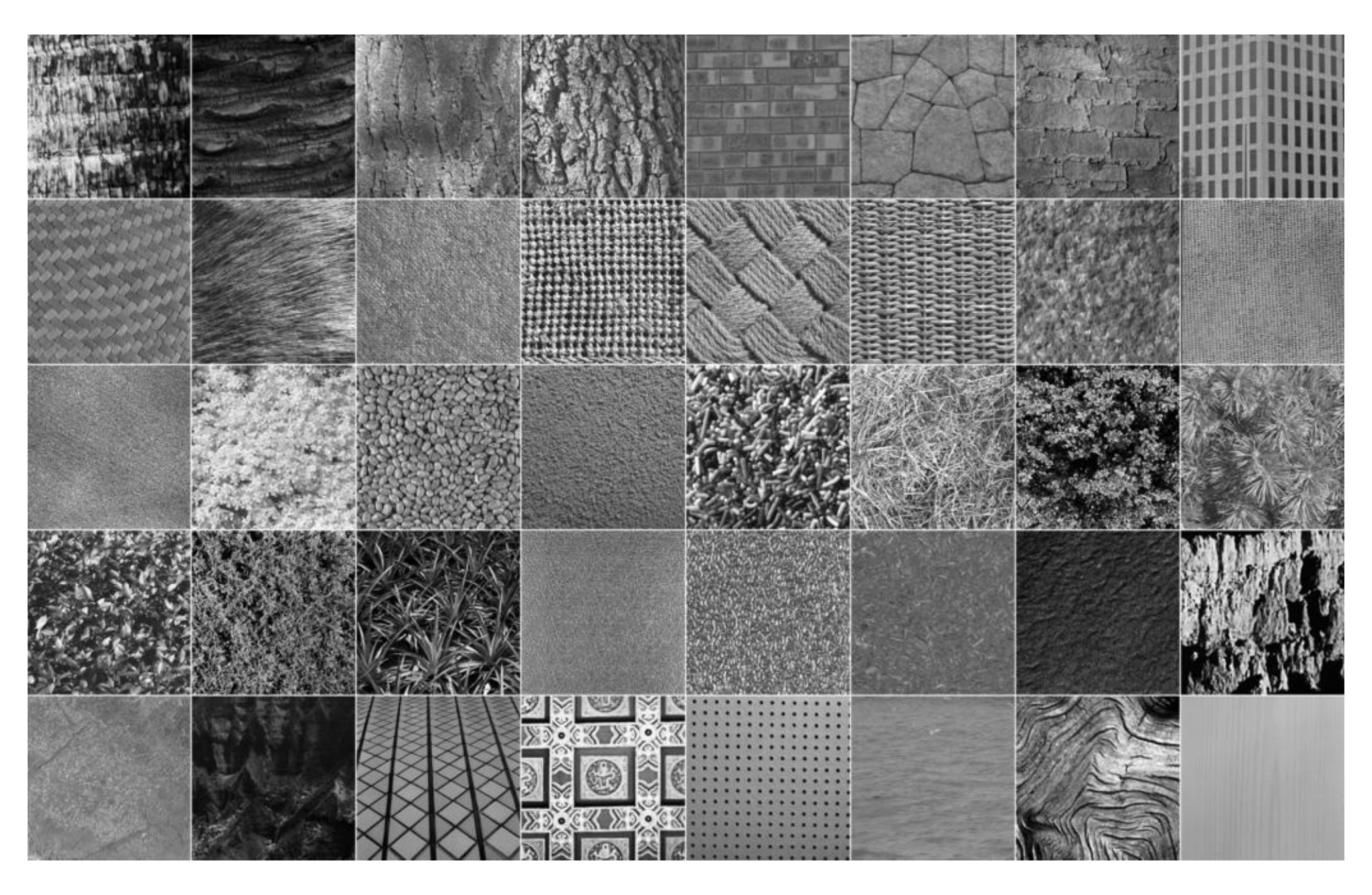}
\caption{Database D1 from left to right and top to bottom, with row and column in brackets: Bark0 (1,1), Bark6 (1,2), Bark8 (1,3), Bark9 (1,4), Brick1 (1,5), Brick4 (1,6), Brick5 (1,7), Buildings9 (1,8), Fabric0 (2,1), Fabric4 (2,2), Fabric7 (2,3), Fabric9 (2,4), Fabric11 (2,5), Fabric14 (2,6), Fabric15 (2,7), Fabric17 (2,8), Fabric18 (3,1), Flowers5 (3,2), Food0 (3,3), Food5 (3,4), Food 8 (3,5), Grass1 (3,6), Leaves8 (3,7), Leaves10 (3,8), Leaves11 (4,1), Leaves12 (4,2), Leaves16 (4,3), Metal0 (4,4), Metal2 (4,5), Misc2 (4,6), Sand0 (4,7), Stone1 (4,8), Stone4 (5,1), Terrain10 (5,2), Tile1 (5,3), Tile4 (5,4), Tile7 (5,5), Water5 (5,6), Wood1 (5,7), and Wood2 (5,8).}
\label{fig:vistex}
\end{center}
\end{figure*}
\begin{figure*}
\begin{center}
\includegraphics[width=0.9\textwidth,clip,trim=0mm 80mm 0mm 80mm]{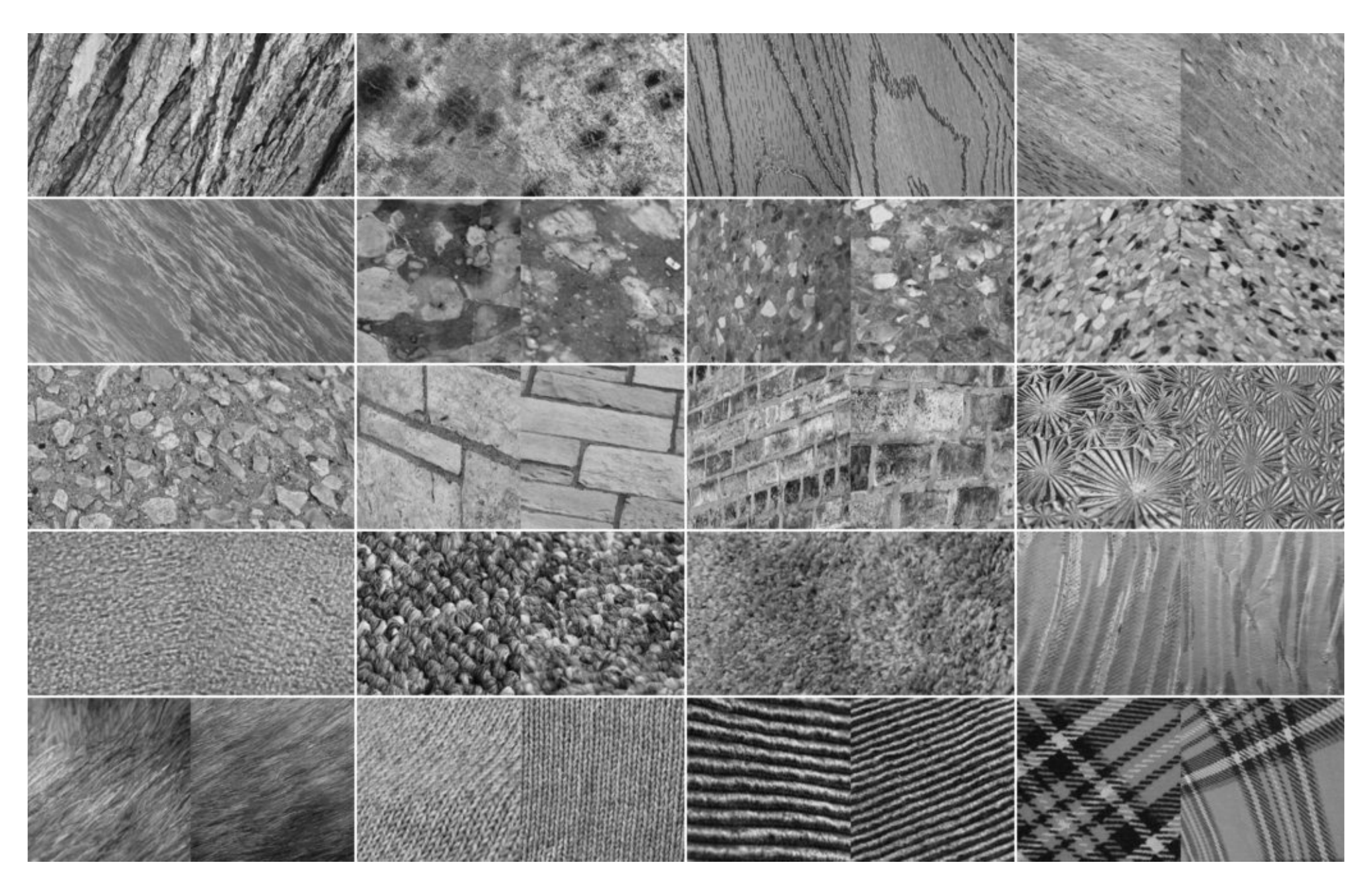}
\caption{Database D2 (patches) from left to right and top to bottom: Bark1, Bark2, Wood2, Wood3, Water, Marble, Floor1, Floor2, Pebbles, Wall, Brick1, Glass1, Glass2, Carpet1, Carpet2, Wallpaper, Fur, Knit, Curdoroy, Plaid.}
\label{fig:uiuc}
\end{center}
\end{figure*}

The database D1 is the same as used in \cite{Do2002} and is based on the VisTex database\footnote{http://vismod.media.mit.edu/vismod/imagery/VisionTexture/distribution.html}. It  consists of 40 texture images
of the size $512\times 512$, each divided into 16 non-overlapping $128\times 128$ patches. This amounts to 640 image samples of 40 different textures depicted in Fig. \ref{fig:vistex}. The database D2 was generated from images of the following subset of the UIUC texture database\footnote{http://www-cvr.ai.uiuc.edu/ponce\_grp/data/}: Bark1, Bark2, Wood2, Wood3, Water, Marble, Floor1, Floor2, Pebbles, Wall, Brick1, Glass1, Glass2, Carpet1, Carpet2, Wallpaper, Fur, Knit, Curdoroy, Plaid. Two $640\times 480$ images from each class are used to create five overlapping $256\times 256$ patches which are then scaled down to half the edge size. Hence, we get a database of 20 different texture classes each containing ten $128\times 128$ patches. Fig. \ref{fig:uiuc} depicts one patch from each utilized image. All image patches are normalized to zero mean and unit energy, in order to avoid any bias caused by the overall lighting condition of each original texture image. The set of all patches generated from the same texture is considered a class. Its cardinality will be denoted by $c$ in the following. Consequently, $c=16$ for D1 and $c=10$ for D2. For each image patch, the $c-1$ most similar patches were retrieved. The \textit{retrieval rate} for each patch is defined as the ratio of the number of retrieved patches from the same class to $c-1$. The overall retrieval rate is the average of the retrieval rates for all the images in the database.

\subsection{Results}
\subsubsection{Overview}
The retrieval rates are summarized in Table \ref{table:d1}.
\begin{table*}
\begin{center}
\begin{tabular}{| c | c | c | c | c | c | c|}
\hline
  & \specialcell{FWT+GGD\\+KLD} &  \specialcell{DT-CWT+WD\\+$SM_\mathrm{Scat}$} & \specialcell{WST+WD,$M$=2\\+$SM_\mathrm{Scat}$} & \specialcell{WST+WD,$M=3$\\+$SM_\mathrm{Scat}$} & \specialcell{NWST+WD,$M=2$\\+$SM_\mathrm{Scat}$} & \specialcell{NWST+WD,$M=3$\\+$SM_\mathrm{Scat}$}\\[0.2cm]
  \hline\hline
Database D1 & 75.50\% & 78.18\% & 78.93\%& 78.14\% & 84.90\% & \textbf{85.30\%}\\
\hline
Database D2 & 52.39\% & 59.61\% & 66.50\% & 63.78\% & \textbf{66.94}\% & 65.17\% \\
\hline
\end{tabular}
\end{center}
\caption{Performance of WST + WD and NWST + WD in comparison with FWT + GGD and DT-CWT + GGD on databases 1 and 2}
\label{table:d1}
\end{table*}
\begin{table}
\begin{center}
\begin{tabular}{|c | c | c | c|}
\hline
   \specialcell{FWT+GGD\\+KLD} &  \specialcell{FWT+G$\Gamma$D\\+KLD} & WD-HMM & Rotated Wavelets\\
  \hline\hline
  76.93\% & 78.40\%  & 80.05\% & 82.81\% \\
  \hline
\end{tabular}
\end{center}
\caption{Performance of state of the art methods on Database D1}
\label{table:d1_ref}
\end{table}
While WST+WD+$SM_\mathrm{Scat}$ produces a similar result as DT-CWT+WD+$SM_\mathrm{Scat}$ on Database D1, NWST+WD+$SM_\mathrm{Scat}$ is able to outperform all of the competing frameworks by $4.72\%$ for $M=2$ and $5.12\%$ for $M=3$. Database D1 is widely used as a benchmark for CBIR retrieval. In order to provide a sense for the state of the art, Table \ref{table:d1_ref} summarizes the results from recent publications on comparable approaches: DWT + Generalized Gamma Distribution (G$\Gamma$D) \cite{Choy2010}, Wavelet Domain Hidden Markov Models (WD-HMM) \cite{Do2002a} and Rotated Complex Wavelets \cite{Kokare2005}. Also, the original result for FWT+GGD+KLD from \cite{Do2002} was included, since we were not able to reproduce it. To the authors' best knowledge, the best published result for this experiment so far was produced by Rotated Complex Wavelets with a retrieval rate of $82.81\%$ which is still outperformed by 2.09\% by NWST+WD+$SM_\mathrm{Scat}$ with $M=2$ and 2.49\% with $M=3$. 

Database D2 is considerably smaller than D1, but involves more variation within the classes, for instance, in terms of camera angle and deformation. Again, the WST greatly improves the retrieval performance. However, this time the regular WST does not fall behind the NWST. Also, increasing the maximum path length up to $M=3$ harms the performance. However, we can conclude that NWST+WD+$SM_\mathrm{Scat}$ performs comparably well and produces the best results in both our test settings.

\subsubsection{Impact of Directionality}
In general, natural textures contain strongly directional features. Hence, it is often beneficial to incorporate a frame of higher directionality as it is done for the Rotated Wavelets approach, for example. The DT-CWT+WD+$SM_\mathrm{Scat}$ method which mostly draws its benefits from increased directional selectivity in comparison with standard FWT based techniques is consistently and significantly outperformed by NWST+WD+$SM_\mathrm{Scat}$, despite the same directionality properties. That is to say, there are reasons to believe that the success of the WST based methods is not exclusively due to increased directionality.

\subsubsection{Impact of Textural Structure}
\begin{figure*}
\begin{center}
\includegraphics[width=\textwidth,clip,trim=70mm 0mm 20mm 15mm]{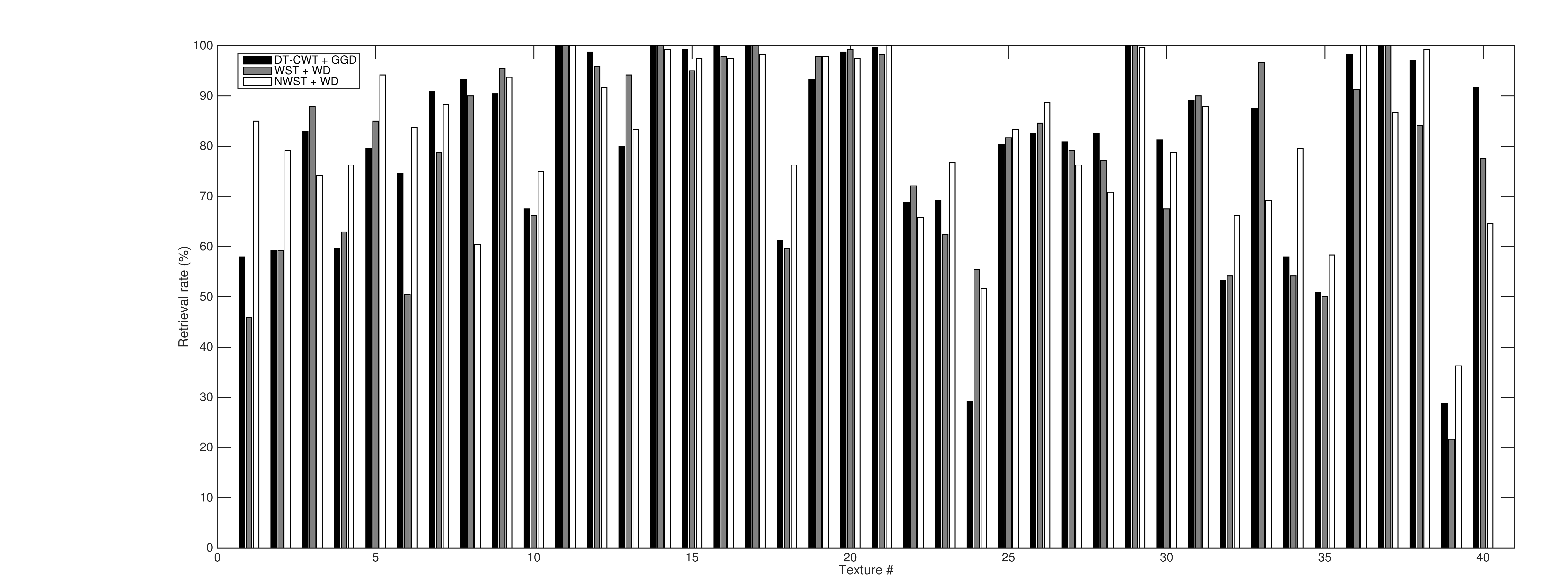}
\caption{Retrieval rates for each texture class of D1 individually. From left to right, with index in brackets: Bark0 (1), Bark6 (2), Bark8 (3), Bark9 (4), Brick1 (5), Brick4 (6), Brick5 (7), Buildings9 (8), Fabric0 (9), Fabric4 (10), Fabric7 (11), Fabric9 (12), Fabric11 (13), Fabric14 (14), Fabric15 (15), Fabric17 (16), Fabric18 (17), Flowers5 (18), Food0 (19), Food5 (20), Food8 (21), Grass1 (22), Leaves8 (23), Leaves10 (24), Leaves11 (25), Leaves12 (26), Leaves16 (27), Metal0 (28), Metal2 (29), Misc2 (30), Sand0 (31), Stone1 (32), Stone4 (33), Terrain10 (34), Tile1 (35), Tile4 (36), Tile7 (37), Water5 (38), Wood1 (39), and Wood2 (40). }
\label{fig:bars}
\end{center}
\end{figure*}
\begin{figure}
\begin{center}
\includegraphics[width=\columnwidth,trim=25mm 10mm 25mm 10mm, clip]{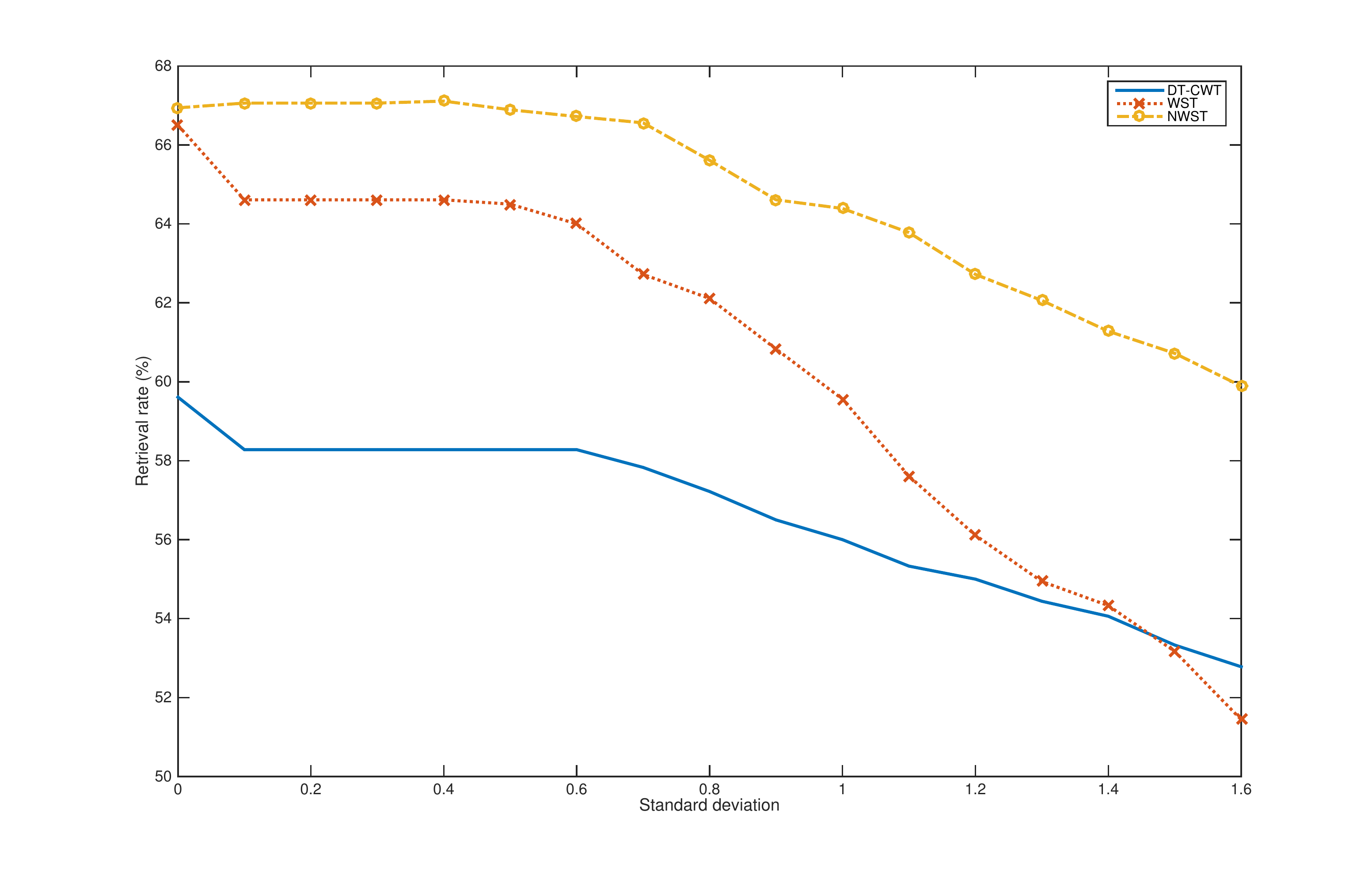} \vspace*{-6mm}
\caption{Retrieval Rate}
\label{fig:blur}
\end{center}
\end{figure}
To further analyze the capabilities of the involved methods, consider Fig. \ref{fig:bars}. It shows the retrieval rates for each class in Database D1 produced by DT-CWT as well as by WST and NWST with $M=2$. DT-CWT + GGD performed well in comparison to the WST based techniques for the classes Brick5, Buildings9, Metal0 and Wood2 and badly for Bark9, Fabric4, Leaves10 and Stone1. In general, it appears that DT-CWT is most suitable for flat and (piecewise) smooth surfaces, while the two methods involving Scattering seem to be preferable for surfaces that are rather uneven, be it because they are rough or because they are bent or dented. This correlates with the premise that Scattering coefficients are less sensitive to spatial deformations, since on the one hand, images of rough surfaces carry significant fractions of high-frequency components and on the other hand spatial deformation in the image is often caused by locally bending or denting the depicted material. More specifically, NWST performed comparably well for the classes Bark0, Bark6, Flowers5, Terrain10 and regular WST for the classes Bark8, Fabric11, Grass1, Stone4. This suggests that WST works better on fine structures while NWST yields better results for coarse structures.

This assumption is corroborated by the following experiment. Fig. \ref{fig:blur} depicts the retrieval rates for the Database D2 after blurring it with a Gaussian filter of different bandwidths. Unsurprisingly, the retrieval performance declines with stronger blurring for all methods. However, it is evident that the retrieval rate for WST declines significantly faster than the other two. This can be interpreted as a consequence of the distortion invariance of normalized Scattering coefficients. More importantly, this confirms that the regular WST method relies on fine features since they are most heavily affected by blurring. On the other hand, images of coarse patterns and flat and piecewise smooth surfaces are less affected by blurring such that the other two methods appear to be relatively robust.

\section{Conclusion and Outlook}
\label{sec:conclusion}
In this work, we propose a subband histogram FE method based on the statistics of the WST of a texture image. Our derivation and analysis demonstrate how to extract statistical features from a WST representation of a texture image by means of matching a Weibull model via ML and how to 
measure 
the similarity of the respective feature vectors via the KLD. The proposed techniques come in handy when it comes to reducing the enormous degree of redundancy introduced by the WST since they represent each subband by as much as two real numbered parameters. The experiments show that the proposed method can outperform comparable algorithms based on filter bank transforms in terms of retrieval accuracy when designed as a CBIR system. Regular WST coefficients seem to work better on fine structures and Normalized WST coefficients on coarse structures.

Since approaches for rotation invariant Scattering representations are already available \cite{Sifre2012, Sifre2013}, it appears feasible to extend the presented ideas towards a rotation invariant CBIR framework which would allow for more general problem settings. 

The summation in \eqref{eq:SM_scat} was partly motivated by assuming statistical independence of the subbands. However, neither does statistical independence hold for the WST subbands, nor for the subbands of other filterbank transforms, in practice. Accounting for statistical dependence in the model could lead to significant improvements in efficiency \cite{Tzagkarakis2006,Do2002a}.

\appendix

\subsection{Estimation of Weibull parameters}
\label{app:b}
In this section, we employ a numerical optimization approach to
estimate the Weibull parameters. A more detailed derivation can be found in \cite{Sornette2006}.
The log-likelihood amounts to
\begin{equation}
\begin{split}
 f_{WD}(k,\lambda) = &\ \ln L_{WD}(x_1,\dots,x_N|k,\lambda) \\
= &~ Nk\ln\lambda+N\ln k\\
&+(k-1)\sum_{i=1}^{N}\ln x_i -\sum_{i=1}^{N}(\lambda x_i)^k.
\end{split}
\label{eq:like_wibl}
\end{equation}
The optimal parameters $(k^{*},\lambda^{*})$ are identified by solving
\begin{eqnarray}
	(k^{*},\lambda^{*}) = \operatorname*{argmax}_{(k,\lambda)} f_{WD}(k,\lambda).
\end{eqnarray}
The first derivatives of $f_{WD}$ with respect to $k$ and $\lambda$ are computed
as
\begin{equation}
\begin{split}
\frac{\partial f_{WD}}{\partial k}=&\ \frac{N}{k}+\sum_{i=1}^{N}\ln x_i+\left(N-\lambda^k\sum_{i=1}^{N}x_i^k\right)\ln\lambda 
\\&-\lambda^k\sum_{i=1}^{N}(\ln x_i)x_i^k
\end{split}
\label{eq:dL_dk}
\end{equation}
and
\begin{equation}
\frac{\partial f_{WD}}{\partial \lambda}=k\left(\frac{N}{\lambda}-\lambda^{k-1}\sum_{i=1}^{N}x_i^k\right).
\label{eq:dL_dl}
\end{equation}
Setting Eq.~\eqref{eq:dL_dl} to be equal to $0$ uniquely determines $\lambda^*$ for $k>0$. Solving for $\lambda$ yields
\begin{equation}
{\lambda^*} = \left(\frac{1}{N} \sum_{i=1}^N x_i^k\right)^{-\frac{1}{k}}.
\label{eq:lambda}
\end{equation}
Substituting \eqref{eq:lambda} in Eq.~\eqref{eq:dL_dk} and equating it with 
$0$ characterize the critical points of $f_{WD}$ in terms of $k$ for $\lambda=\lambda^*$, i.e.
\begin{equation}
\frac{N}{k}+\sum_{i=1}^{N}\ln x_i-\frac{N\sum_{i=1}^{N}(\ln x_i)x_i^k}{\sum_{i=1}^{N}x_i^k}=0.
\label{eq:k_solve}
\end{equation}
Since the function $f_{WD}$ as defined in Eq.~\eqref{eq:like_wibl} is concave with respect to
$k>0$, a numerical algorithm such as the Newton-Raphson method can be used to obtain $k^{*}$.
Substituting this solution into Eq.~\eqref{eq:lambda} yields $\lambda^{*}$.

\ifCLASSOPTIONcaptionsoff
  \newpage
\fi



%

\bibliographystyle{IEEEtran}
\bibliography{IEEEtran}




\end{document}